\newtheorem{theorem}{Theorem}
\newenvironment{proof}[1][Proof]{\begin{trivlist}
\item[\hskip \labelsep {\bfseries #1}]}{\end{trivlist}}
\newsavebox{\ieeealgbox}
\newcolumntype{M}[1]{>{\arraybackslash}m{#1}}
\newcolumntype{N}{@{}m{0pt}@{}}
\begin{document}

\title{Impact of End-User Behavior on User/Network Association in HetNets \vspace{-0.8cm} }
\author{\IEEEauthorblockN{Mohammad~Yousefvand \IEEEauthorrefmark{1},
Mohammad~Hajimirsadeghi\IEEEauthorrefmark{1},
and Narayan B. Mandayam\IEEEauthorrefmark{1}}
\IEEEauthorblockA{\IEEEauthorrefmark{1}WINLAB, Rutgers, The State University of New Jersey, North Brunswick, NJ, USA.\\ Email: \{my342, mohammad, narayan\}@winlab.rutgers.edu}

\thanks{This work is supported in part by the U.S. National Science Foundation under Grant No. 1421961 and Grant No. ACI-1541069}

}
%
\IEEEoverridecommandlockouts

\maketitle

\begin{abstract}
We study the impact of end-user behavior on user/network association in a HetNet with multiple service providers (SPs). Specifically, we consider the uncertainty in the service guarantees offered by SPs in a HetNet, and use Prospect Theory (PT) to model end-user decision making. We formulate user association with SPs as a multiple leader Stackelberg game where each SP offers a data rate to each user with a certain service guarantee and at a certain price, while the user chooses the best offer among multiple such bids. Using the specific example of a HetNet with one cellular base station and one WiFi access point, we show that when the end users underestimate the advertised service guarantees, then some of the Nash Equilibrium strategies under the Expected Utility Theory (EUT) model become infeasible under PT, and for those Nash Equilibria that are feasible under both EUT and PT, the resulting user utilities are less under PT. We propose resource allocation and bidding mechanisms for the SPs to mitigate these effects.

\end{abstract}

\begin{IEEEkeywords}
user association, HetNets, spectrum allocation, game theory, prospect theory, Stackelberg game.
\end{IEEEkeywords}

\section{Introduction}
\label{Intro}

The emergence of HetNets for network densification in future wireless networks has lead to extensive user association studies in this context \cite{Ramazanali:2016:SUA, Liu:2016:UAI}. The range of approaches include evolutionary game theory \cite{Han:2014:EFU, Yousefvand:2015:MNC}, auction based models \cite{Yousefvand:2017:DES}, matching theory \cite {Han:2016:MBJ}, Stackelberg, the competitive approach using Colonel Blotto game\cite{Mohammad:2017:DCB,Mohammad:2017:IND}, and other gametheoretic models \cite{Haddad:2013:GTA, Sokun:2015:QGU, Zhou:2015:DUA, Lin:2015:OUA, Chen:2014:NBS}. However, most of these mechanisms essentially are borne out of expected utility theory (EUT) based approaches. When a service provider (SP) controls access to end-users via differentiated and hierarchical monetary pricing, then the performance of the network is directly subject to end-user decision-making that has shown to deviate from EUT in many cases \cite{Kahneman:1979:PT, Li:2014:WUI, Yang:2014:IEU, Yang:2015:PPC}. 
In this work, we use Prospect Theory \cite{Kahneman:1979:PT}, a Nobel prize winning theory that explains real-life decision-making and its deviations from EUT behavior, to study user decisions in wireless HetNets. To do so, we first formulate the user association problem in HetNets as a Stackelberg game between SPs and user, in which WiFi and cellular SPs as the leaders of the game make offers to the user, and the user as a follower makes a decision about the received offers. Then based on the user response to received offers, the SPs optimize their bids to maximize their utility. By considering a convex pricing function for the SPs and a concave payoff function for the users, we compare the utility of both user and SPs under EUT and PT. We derive all possible pure strategy and mixed strategy NEs for the proposed Stackelberg game. We also provide the conditions under which the existence of such NEs are guaranteed, for both EUT and PT cases. To see the effects of heterogeneity of the SPs, we also compare the results for both symmetric and non-symmetric SP models. To the best of our knowledge, this is the first paper which address PT effects on user association in HetNets.

The rest of this paper is organized as follows. In section \ref{sec:StackelbergGame}, after describing the HetNet model, we introduce the model of interactions between SPs and users as a Stackelberg game, and formulate user association problem. In section \ref{sec:BRS}, we discuss the best response strategies for all players, and in section \ref{sec:NEEUT}, we derive all the possible NE strategies, and the conditions under which the existence of such NEs can be guaranteed under EUT. In section \ref{sec:NEPT} we show the impacts of PT on user association and consequently its effect on SP utilities. In section \ref{sec:PBS} we introduce resource allocation and bidding strategies for SPs to compensate part of their lost utility under PT. We validate the analysis in the paper using simulation results in section \ref{sec:SimulationResults} and conclude in section \ref{sec:conclusion}.

\section{System Model and Problem Formulation}
\label{sec:StackelbergGame}

\subsection{Network Model}
\label{NM}
To study user association in HetNets, we developed a two-tier HetNet scenario which includes $N$ wireless users that are randomly distributed within the coverage area of $K$ base stations. As shown in Fig. \ref{Fig:figl}, in our HetNet model there is one macrocell LTE BS located in the center of the area and $K-1$ overlaid small cell WiFi access points who are competing with each other to serve the users in the HetNet. We assume each user in the HetNet receives several offers from service providers (SPs) in both cellular and WiFi tiers, and upon receiving such offers, the user makes a decision to accept or reject any of the received offers. As we can see in Fig. \ref{Fig:figl}, the number of users (load) in each small cell could be different than other cells, and also the number of covering BSs for each user can differ from other users. Moreover, depending on the level of noise and attenuation on each user-BS link, the users in each part of the network may experience different channel gains than other users.

\begin{figure}[tb!]
  \includegraphics[width=\linewidth]{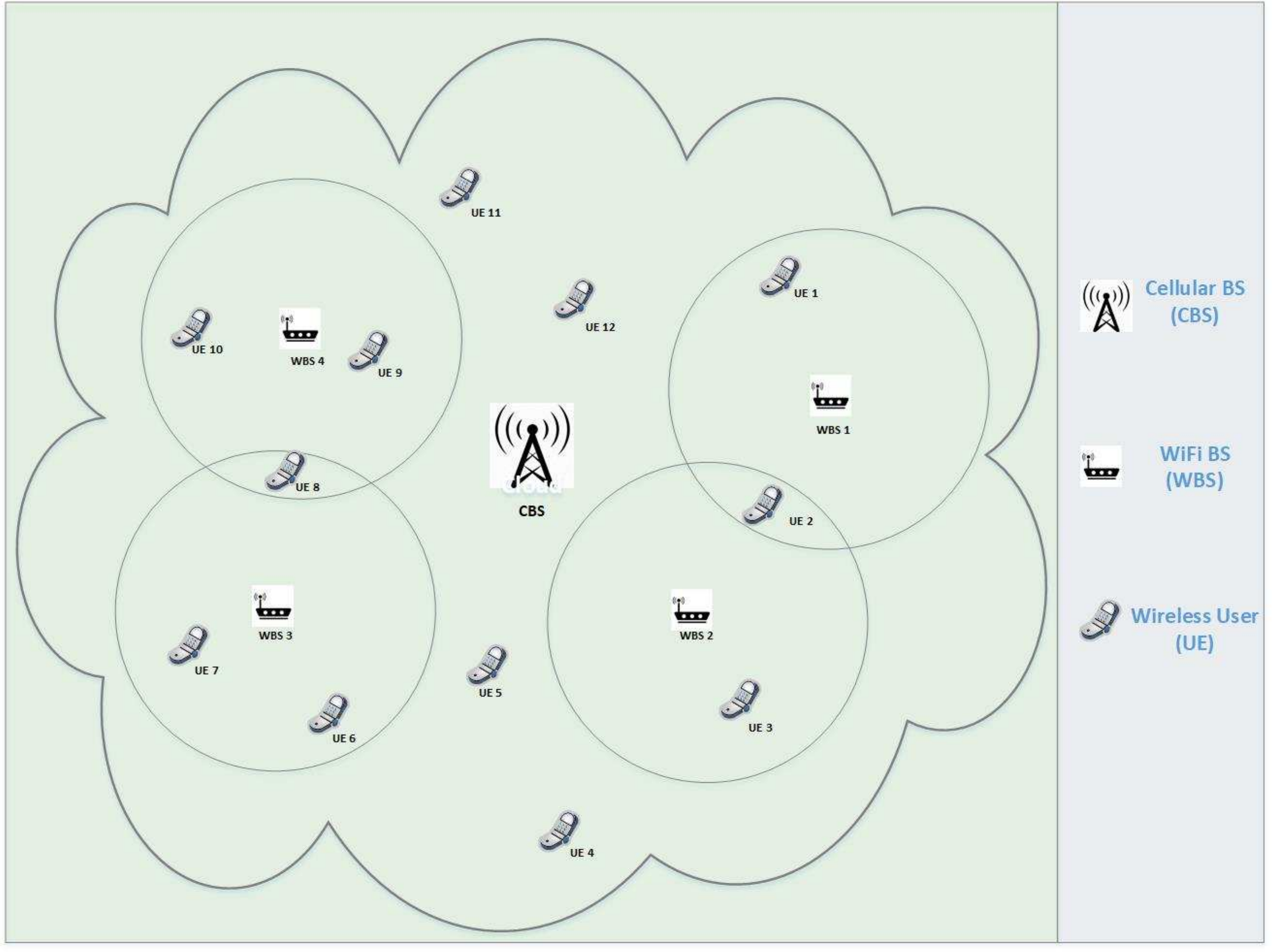}
  \centering
  \caption{Heterogeneous Network (HetNet) Model.}
  \label{Fig:figl}
\end{figure}

\subsection{Stackelberg Game for User Association in HetNets}
\label{subsec:SM}
In our HetNet model, each user receives $K$ different offers from all $K$ base stations, i.e., one offer from the cellular BS and $K-1$ offers from the WiFi BSs. To enable multihoming, we assume each user can be simultaneously connected to both cellular and WiFi SPs to receive the data service. Specifically, we assume that each user can only be associated to the cellular SP and the best serving WiFi SP for that specific user, which is the SP who offers a bid with highest utility among all WiFi SPs. Hence, we model user association problem in this work as a Stackelberg game with two leaders and one follower, where SPs act as the leaders who make service offers to the user, and the user serves as a follower who accepts or rejects the received offers. Since we have $N$ users in our HetNet model, to solve the user/network association problem in a distributed manner, we need to solve $N$ stackelberg games each with three players including WiFi SP and cellular SP as the leaders and one user as the follower. Since these $N$ Stackelberg games are independent, assuming the BS's maximum bandwidth budget per user is fixed and the same for all users as shown in Eq. \ref{eq14}, we can consider and solve only one of these Stackelberg games without loss of generality. Note that finding the optimal bandwidth/power allocation for SPs is a NP-hard problem \cite{Yousefvand:2017:DES, Hajisami:2017:DJP} and not the focus of this paper. In the remainder of this paper, we focus on one of these games. 
Using the index $w$ for user's preferred WiFi SP and the index $c$ for cellular SP, we denote the $bids$ of WiFi and cellular SPs with triples $(b_w, r_{EUT}(b_w), BW_{w,EUT} )$, and $(b_c, r_{EUT}(b_c), BW_{c,EUT} )$, respectively, in which the first term shows the advertised data rate, the second term is the proposed price for the offered data rate, and the third term is the amount of BW that will be allocated to the user by each SP. User decisions are binary, which means the user either accepts a bid or rejects it, and there is no probabilistic decision by user. We denote user decisions on cellular and WiFi SPs offers with $p_c$ and $p_w$ respectively, which are binary variables. So, the tuple $(p_c,p_w)$ represents user’s strategy with regard to the received bids, hence, the user has four possible strategies $(0,0)$, $(0,1)$, $(1,0)$ and $(1,1)$. All three players have a cost function and a benefit function, and their utility functions are simply the difference of the cost and benefit functions. We represent user’s utility function as:
\begin{equation}
\label{eq1}
U_{user,{\rm{ }}EUT}{\rm{ }}\left\{ {p_c,p_w{\rm{ }}} \right\} = H\left( {B_{joint}{\rm{ }}} \right) - c_{user}{\rm{ }}\left( {p_w,p_c{\rm{ }}} \right)
\end{equation}
where $H(B_{joint})= \delta(B_{joint})^{1/\theta} $ is the users benefit function which is a logarithmic concave function of the aggregate data rate achieved, $B_{joint}$, by the user, and $c_{user}{\rm{ }}\left( {p_w,p_c{\rm{ }}} \right)= p_w r_{EUT}(b_w) + p_c r_{EUT} (b_c)$ is the user’s cost function which shows the aggregate price that must be paid by user to the SPs for each $(p_c, p_w)$ strategy. The aggregate data rate is defined as $ B_{joint} = b_c \bar {F} _{B_c}( b_c,BW_{c,EUT})p_c + b_w \bar {F}_{B_w}( b_w,BW_{w,EUT})p_w $ where $\bar F_{B_c}( b_c,BW_{c,EUT})$ is the service guarantee of the offer received from the cellular SP, and denotes the probability of having the actual data rate of user from cellular SP, $B_c$, equal or higher than the advertised data rate by cellular SP, $b_c$. Similar definition holds for $\bar F_{B_w}( b_w,BW_{w,EUT})$ which is the service guarantee of the offer received from WiFi SP.

Once the user chooses its best response strategy, $(p_c^*, p_w^*)$, the SPs will respond with their best response strategies to maximize their own utilities based on the received user decision. The utility of the WiFi SP, $U_{SP,w}$ is defined as
\begin{equation}
\label{eq2}
U_{SP,w} = p_w r_{EUT}( {b_w}) - C_w( b_w,BW_{w,EUT}),
\end{equation}
and the utility of cellular SP, $U_{SP,c}$ is defined as
\begin{equation}
\label{eq3}
U_{SP,c} = p_c r_{EUT}( {b_c}) - C_c( b_c,BW_{c,EUT}),
\end{equation} 
where, the first term in both of these equations is the SPs' expected payoff from the user, and the second term is their incurred service cost. The SPs' payoff from the user is equal to the offered price in their bids if the user accepts their offers, otherwise their payoff from the user is equal to zero. Overpricing may lead the user to reject their bids, and underpricing may lead their utility to be negative. So, choosing a proper pricing function is critical for the SPs. In this work, we assume both SPs use convex pricing functions, as $r_{EUT}(b_w)= {\alpha_1} (b_w)^{\beta_1}$, and $r_{EUT}(b_c)= {\alpha_2} (b_c)^{\beta_2}$, $({\beta_1},{\beta_2}>1)$, where ${\alpha_1}$ and ${\beta_1}$ are payoff parameters for the WiFi SP, and ${\alpha_2}$ and ${\beta_2}$ are payoff parameters for the cellular SP. We also assume the SPs use linear cost functions, as $C_w( b_w,BW_{w,EUT})= c_1 (b_w) + c_2(BW_{w,EUT})$, and $C_c( b_c,BW_{c,EUT})= c_3 (b_c) + c_4(BW_{c,EUT})$, where $c_1$ and $c_2$ are cost coefficients for the WiFi SP, and  $c_3$ and $c_4$ are cost coefficients for the cellular SP. To satisfy the user's minimum data rate constraint, the SPs must ensure that their offered data rate is higher than the minimum data rate required by the user, $b_{min}$. Thus, the data rate constraints for the WiFi and the cellular SPs will be defined as below, respectively:
\begin{align}
\label{eq4}
b_w \bar F _{B_w}( {b_w,BW_{w,EUT}}) \ge b_{min},\\
\label{eq5}
b_c \bar F _{B_c}( {b_c,BW_{c,EUT}}) \ge b_{min}.
\end{align}

\section{Best Response Strategies for players}
\label{sec:BRS}

\subsection{User Optimization Problem}
Upon receiving the offers from the SPs, the user will run an optimization problem to find its best strategy with regard to received offers. We assume the user's payoff function from the received data is a concave function, as defined in Eq. \ref{eq1}, in which the user's utility is not linearly increased with increasing the data rate. It means that as long as the minimum data rate constraint is satisfied, the user is not willing to pay extra price with linear relation to the extra data rate offered by SPs. To find its best response strategy,$(p_c^*, p_w^*)$, user will run the following optimization problem (denoted as $Max1$): \\\\
\textbf {Max1 Problem:} \textit{User's Utility Maximization}.\\
----------------------------------------------------------------------------
\begin{align}
\label{eq6}
& \max_{p_c, p_w}~[\,\delta(B_{joint})^{1/\theta} -~p_w{\alpha_1} (b_w)^{\beta _1}-~p_c{\alpha_2} (b_c)^{\beta _2}~] && \\
& \text{subject to}\nonumber \\
\label{eq7}
& B_{joint}\ge~b_{min}  \\
\label{eq8}
&  \delta(B_{joint})^{1/\theta} \ge~~p_w{\alpha_1} (b_w)^{\beta _1}+~p_c{\alpha_2} (b_c)^{\beta _2}, \\
\label{eq9}
& p_c, p_w \in \{ 0,1\}
\end{align}

As shown above, the user has two major constraints for bid selection. The first constraint, shown in Eq. \ref{eq7}, is the user's data rate constraint which ensures the expected data rate for the user is higher than its minimum required data rate, $b_{min}$. The second constraint defined in Eq. \ref{eq8} is the user's utility constraint which guarantees a positive utility for the user from its strategy. 

\subsection{SPs Optimization Problems}
When the SPs receive user's decision with regard to their offers, they choose their best response strategy. The the best response strategy $(b_w^*, BW_{w,EUT}^*)$ for the WiFi BSs is obtained by solving the optimization problem below (denoted as $Max2$):\\\\
\textbf {Max2 Problem:} \textit{WiFi SP's Utility Maximization}.\\
----------------------------------------------------------------------------
\begin{align}
\label{eq10}
& \max_{b_w, BW_{w,EUT}}~[p_w {\alpha_1} (b_w)^{\beta _1} - (c_1 b_w + c_2 BW_{w,EUT})~] \\
& \text{subject to}\nonumber \\
\label{eq11}
& 0 \leq BW_{w,EUT} \leq BW_{w,max}, \\
\label{eq12}
& 0 \leq b_w \leq b_{w,max}, \\
\label{eq13}
&  b_w \bar F _{B_w}( {b_w,BW_{w,EUT}}) \ge b_{min},
\end{align}
in which, $BW_{w,max}$ is the maximum amount of bandwidth that can be allocated to the user by the WiFi SP, and $b_{w,max}$ is the maximum achievable data rate by the user from the WiFi SP considering $BW_{w,max}$ and the gain of the channel between the WiFi SP and the user. We assume the SPs use a proportionally fair bandwidth allocation algorithm to determine the maximum amount of bandwidth that can be allocated to each of their users. Assuming $BW$ as the total amount of bandwidth available at SP $i$, $i \in\{w,c\}$, and $N$ as the total number of users in our HetNet, the maximum amount of bandwidth that can be allocated to each user by the SP $i$, $BW_{i,max}$ is given as

\begin{equation}
\begin{aligned}
\label{eq14}
& BW_{i,max}= (G_{BA}*~BW)/\sum_{j=1}^{N}({a_j * c_j}),\\
\end{aligned}
\end{equation}
in which $G_{BA}$ is the gain of bandwidth allocation which is less than one due to the guard bands between channels for preventing interference, $a_j$ is a binary variable showing the activity of the users which $a_j=1$ if the user $j$ is active and has data demand, otherwise $a_j=0$,  and $c_j$ is also a binary variable representing the coverage of the user $j$ by the the BS $i$, where $c_j=1$ if the SINR of the link between the user $j$ and the BS $i$ is higher than a certain threshold to have a reliable transmission, otherwise $c_j=0$. Considering $BW_{i,max}$, the maximum achievable data rate by user, $b_{i,max}$ is given by:

\begin{equation}
\begin{aligned}
\label{eq15}
& b_{i,max}= BW_{i,max} \log (1+\frac{P h_j a_j c_j}{\sigma^2}),\\
\end{aligned}
\end{equation}
where, $P$ is the transmit power of the BS, $h_j$ is the channel gain between the user $j$ and the SP $i$, and $\sigma^2$ is the noise variance over the transmission channel. Similarly, the cellular SP runs $Max3$ optimization problem, which is defined exactly similar to $Max2$, except the index $w$ is replaced with the index $c$ and the parameters $\alpha_1$, $\beta _1$, $c_1$, $c_2$ are replaced with the parameters $\alpha_2$, $\beta _2$, $c_3$, $c_4$, respectively. In Theorem \ref{Theorem1}, we prove that for both WiFi and cellular SPs, the best response strategies derived from $Max2$ and $Max3$ optimization problems, satisfy the minimum data rate constraint with equality. \\

\begin{theorem}
\label{Theorem1}
The SPs best response strategies, derived from $Max2$ and $Max3$ problems, will always satisfy the minimum data rate constraint in the boundary of its feasibility region, i.e. we always have $b_w^* \bar F _{B_w}( {b_w^*,BW_{w,EUT}^*}) = b_{min}$, and $b_c^* \bar F _{B_c}( {b_c^*,BW_{c,EUT}^*}) = b_{min}$ for the WiFi and the Cellular SPs, respectively.
\end{theorem}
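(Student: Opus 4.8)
The plan is to prove the statement by contradiction, working out the WiFi SP's \emph{Max2} problem in full; since \emph{Max3} is structurally identical (only the indices and parameters are relabeled, as noted before the theorem), the same argument transfers verbatim and yields the cellular claim. Suppose an optimal solution $(b_w^*, BW_{w,EUT}^*)$ of \emph{Max2} satisfied the minimum data rate constraint strictly, i.e. $b_w^* \bar F_{B_w}(b_w^*, BW_{w,EUT}^*) > b_{min}$. I would then construct a feasible point with strictly larger objective value, contradicting optimality.

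The structural observation driving the whole proof is that the variable $BW_{w,EUT}$ enters the utility in Eq. \ref{eq10} only through the cost term $-c_2 BW_{w,EUT}$; the revenue $p_w \alpha_1 (b_w)^{\beta_1}$ and the remaining cost $c_1 b_w$ depend on $b_w$ alone. Hence, holding $b_w = b_w^*$ fixed and decreasing the bandwidth by any amount $\epsilon > 0$ raises the objective by exactly $c_2 \epsilon > 0$, regardless of $p_w$. It therefore suffices to show that a small decrease preserves feasibility. First, feasibility forces $BW_{w,EUT}^* > 0$: with zero allocated bandwidth the achievable rate, and thus the service guarantee $\bar F_{B_w}(b_w^*, 0)$, is zero, so the left-hand side of Eq. \ref{eq13} would be $0 < b_{min}$, violating the constraint; so there is always room to shrink the bandwidth.

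Next I would invoke continuity of $\bar F_{B_w}(b_w^*, \cdot)$ in its bandwidth argument. Because Eq. \ref{eq13} holds with \emph{strict} inequality at $BW_{w,EUT}^*$, there is an $\epsilon > 0$ small enough that $b_w^* \bar F_{B_w}(b_w^*, BW_{w,EUT}^* - \epsilon) > b_{min}$ continues to hold and $BW_{w,EUT}^* - \epsilon \ge 0$. The box constraint on $b_w$ (Eq. \ref{eq12}) is untouched since $b_w$ is unchanged, and the upper bound in Eq. \ref{eq11} is only slackened by a decrease. Thus $(b_w^*, BW_{w,EUT}^* - \epsilon)$ is feasible with strictly higher utility, the desired contradiction. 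Consequently any optimum satisfies $b_w^* \bar F_{B_w}(b_w^*, BW_{w,EUT}^*) = b_{min}$, and the identical reasoning gives $b_c^* \bar F_{B_c}(b_c^*, BW_{c,EUT}^*) = b_{min}$ for \emph{Max3}.

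The main obstacle is not computational but lies in making explicit the two modeling assumptions on $\bar F_{B_w}$ that the perturbation silently uses: continuity in $BW_{w,EUT}$ (needed to preserve a strict inequality under a small change) and $\bar F_{B_w}(b_w^*, 0) = 0$ together with $b_{min} > 0$ (needed to force $BW_{w,EUT}^* > 0$). Pleasantly, \emph{monotonicity} of $\bar F_{B_w}$ in the bandwidth is \emph{not} required: since the bandwidth touches the objective only through the linear cost, any feasibility-preserving decrease improves utility, so the contradiction survives whatever the sign of that dependence. A KKT-based alternative is also available — the multiplier on Eq. \ref{eq13} must be strictly positive because $\partial U_{SP,w}/\partial BW_{w,EUT} = -c_2 \ne 0$ — but the perturbation argument is cleaner and sidesteps any differentiability assumption on $\bar F_{B_w}$.
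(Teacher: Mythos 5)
Your proof is correct and takes essentially the same route as the paper's: a contradiction argument that fixes $b_w^*$ and shrinks the bandwidth to lower the cost term $-c_2 BW_{w,EUT}$ while keeping Eq.~\ref{eq13} feasible, contradicting optimality. The only difference is cosmetic --- the paper jumps directly to the bandwidth $BW_{w,EUT}^\prime$ at which the constraint is tight (invoking monotonicity of $\bar F_{B_w}$ in the bandwidth), whereas you use a small $\epsilon$-perturbation plus continuity, which, as you note, dispenses with the monotonicity assumption.
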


\begin{proof}
We prove this by contradiction for the WiFi SP.
Assume $({b_w^*,BW_{w,EUT}^*})$ is the optimal solution for $Max2$ problem, and $BW_{w,EUT}^*$ is not a marginal BW, i.e. $b_w^* \bar F _{B_w}( {b_w^*,BW_{w,EUT}^*}) \neq b_{min}$. Considering Eq. \ref{eq13}, we can infer that $b_w^* \bar F _{B_w}( {b_w^*,BW_{w,EUT}^*}) > b_{min}$ (1).
In this case $\exists~BW_{w,EUT}^\prime $ such that $b_w^* \bar F _{B_w}( {b_w^*,BW_{w,EUT}^\prime}) = b_{min}$ (2). From (1) and (2), and considering the direct relation between $F _{B_w}( {b_w^*,BW_{w,EUT}^*})$ and $BW_{w,EUT}^*$, we can infer that $BW_{w,EUT}^\prime<BW_{w,EUT}^*$. Now, considering the WiFi SP's utility function defined in Eq. (11) and due to its reverse relation with the advertised bandwidth, $BW_{w,EUT}$, we can infer that $U_{SP,w}({b_w^*}, BW_{w,EUT}^\prime)>U_{SP,w}({b_w^*}, BW_{w,EUT}^*)$ which is in conflict with the initial assumption that $({b_w^*,BW_{w,EUT}^*})$ is the optimal solution for $Max2$ problem. So, the proof is complete. The same proof is valid for the cellular SP.
\end{proof}

\section{NE Existence Analysis under EUT}
\label{sec:NEEUT}

We derive all the potential Nash Equilibrium strategies for the user in the proposed Stackelberg game under EUT. To see the effects of the SPs heterogeneity on the existence of NE, we consider both symmetric and asymmetric SPs cases.
\subsection{Symmetric SPs under EUT}
\label{SymmetricEUT}
Under the symmetric model, we assume both SPs offer the same data rate and use the same pricing and cost functions, i.e. $\alpha_1=\alpha_2=\alpha$, $\beta _1=\beta _2=\beta$, $c_1=c_3=c_b$, $c_2=c_4=c_{BW}$. In this situation, both SPs offer $(b^*, BW^*)$ with the price of $r_{EUT}(b^*)=\alpha(b^*)^\beta$, and this offer costs $c_b b^* +~c_{BW} {BW}^*$ for them. Table \ref{fig:Table1} summarizes the NE strategies for the user and their existence conditions under EUT for both symmetric and asymmetric SPs cases. As we can see from this table, under symmetric case user has two pure strategy NEs, $(0,0)$ and $(1,1)$, and one mixed strategy NE, $(1,0)/(0,1)$. In fact, if the user's payoff coefficient, $\delta$, is less than a threshold which guarantees positive utility for the user, it will choose $(0,0)$ strategy and rejects both of received offers. However, if the user's payoff coefficient is big enough such that the expected payoff of the user from expected data rate is higher than the offered price for it, i.e. $\delta(b_{min})^{1/\theta} \geq \alpha(b^*)^\beta $ then the user will accept at least one of the SPs offers. Note that in this situation, the user's expected data rate from both SPs is $2b_{min}$ and the expected price for it is $2\alpha(b^*)^\beta$. However, due to the concavity of the user's payoff function, when the data rate is doubled, the resulting payoff from that data rate will not be doubled for the user, and is less than twice of the initial payoff. In this situation, if the extra payoff from the second SP, $H(2b_{min})-H(b_{min})=\delta(2^{1/\beta}-1)(b_{min})^{1/\theta}$, is higher than the extra price asked for it by second SP, $\alpha(b^*)^\beta$, then the user will choose $(1,1)$ strategy and accepts both of received offers. Otherwise, the user accepts one of the received offers only, and due to the symmetry of received bids, the user will choose one of WiFi or cellular offers with the same chance, each time. It leads the user to have a mixed strategy NE $(0,1)/(1,0)$ which means in half of the times the user will choose the WiFi SP offer, and in the other half the user will choose the cellular SP offer. Consequently, because the SPs know that their offers are not going to be accepted by the user in 50\% of times, they will also choose a mixed strategy of $(b^*, BW^*)/(0,0)$ which means in half of the times they prefer to stay silent and not offer any data rate to prevent negative utility as a result of being rejected by the user.

\begin{table}
\setlength{\belowcaptionskip}{-5pt}
  \caption{NE table for user under EUT.}
 \includegraphics[width=11cm, height=8cm]{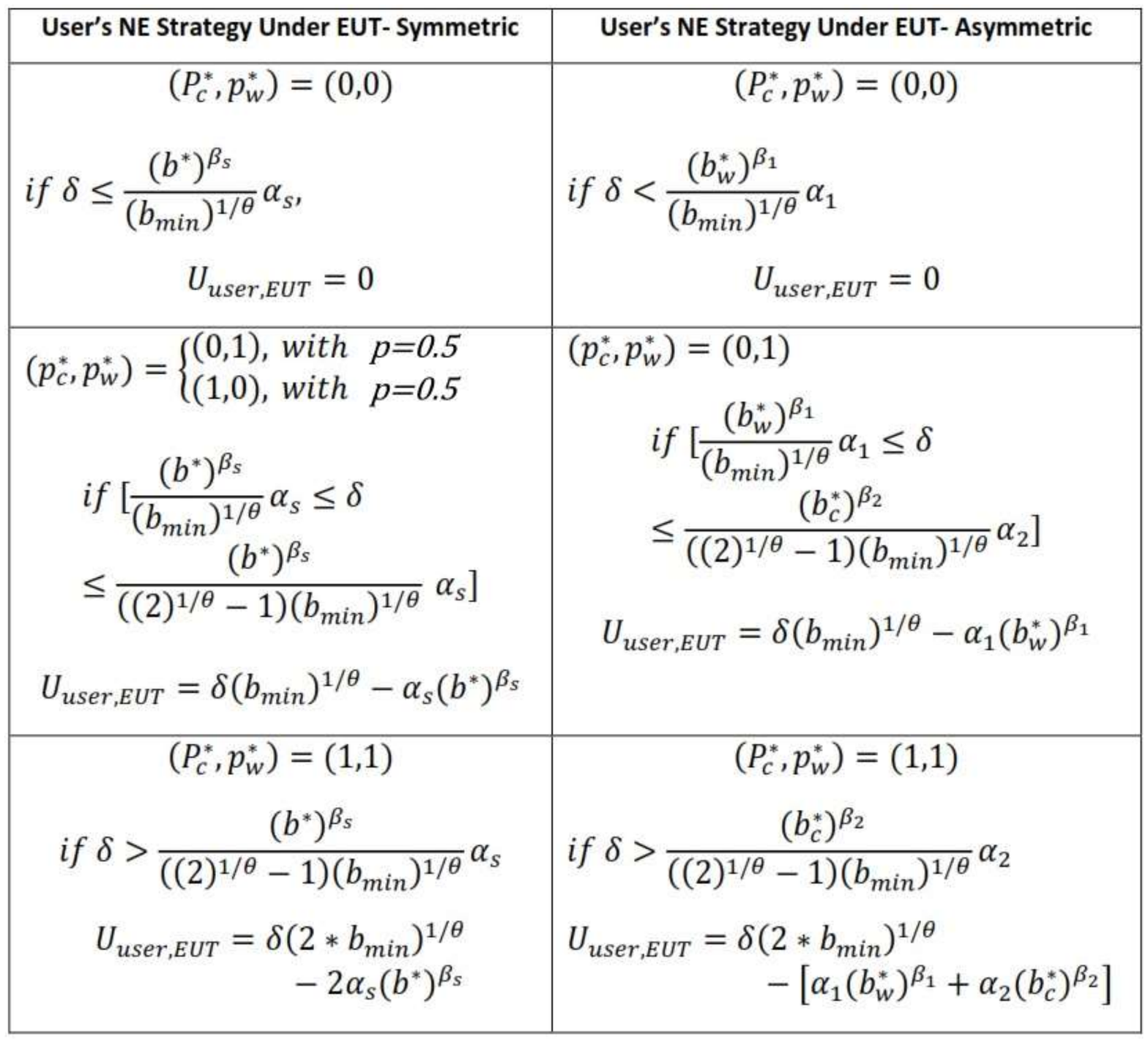}
  \centering
  \label{fig:Table1}
  \vspace{-3em}
\end{table}

 \subsection{Asymmetric SPs under EUT}
 \label{AsymmetricEUT}
 Under the asymmetric model, we assume the SPs have different pricing and cost functions, as defined in subsection \ref{sec:StackelbergGame}.\MakeUppercase{\ref{subsec:SM}}. We denote the WiFi and the cellular SPs best response strategies under asymmetric model as $(b_w^*, BW_{w,{EUT}}^*)$, and $(b_c^*, BW_{c,{EUT}}^*)$, respectively. Here we assume due to better channel conditions and SINR, the WiFi SP offers its data service with a cheaper price as compared to the cellular SP, i.e. ${\alpha_1} (b_w^*)^{\beta _1} \leq {\alpha_2} (b_c^*)^{\beta _2}$. In fact, because the WiFi SP requires less amount of BW to satisfy user minimum data rate, it's cost in serving user is lower than the cellular SP, thus it justifies the WiFi SP's demand for the lower price. Note that, despite offering different bids by the WiFi and the cellular SPs, because of Theorem 1, still the user's expected data rate from  each of them is equal to $b_{min}$ considering the service guarantees, i.e. $b_w^* \bar F _{B_w}( {b_w^*,BW_{w,EUT}^*})= b_{min}$, and $b_c^* \bar F _{B_c}( {b_c^*,BW_{c,EUT}^*})=b_{min}$. As shown in Table \ref{fig:Table1}, we don't have any mixed strategy NE under asymmetric model. The reason is that, when the user's best response strategy is to accept only one of the received bids, it will always accept the bid from superior SP which leads to a higher utility for him, which is the WiFi SP in our case. That's why the user does not have $(1,0)$ NE strategy under asymmetric model, and it only has three pure strategy NEs, including $(0,0)$, $(0,1)$ and $(1,1)$. According to the user's NE table under EUT, if the user's payoff coefficient, $\delta$, is not bigger than a threshold to guarantee a positive utility for the user from accepting any bid, then the user will choose $(0,0)$ strategy and rejects both of the received offers. When the user's expected utility from the WiFi SPs offer is positive but accepting both offers together reduces the user's utility, the user will choose (0,1) strategy to accept the WiFi SP's bid and reject the cellular SP's offer. However, when the user's payoff coefficient is bigger than a threshold shown in Table \ref{fig:Table1}, then the extra payoff of the user achieved from the cellular SP is higher than the price asked by the cellular SP, and it convinces the user to accept both offers even though the WiFi SP offer is enough to guarantee its minimum data rate. In this situation, the user's expected utility from $(1,1)$ strategy is higher than its expected utility from $(0,1)$ strategy, and thus the user chooses $(1,1)$ to achieve higher utility.

 \section{NE Existence Analysis under PT}
\label{sec:NEPT}

So far, our analysis on the existence of NE and derivation of the user's best response strategies was based on EUT. In fact, when the user is making decisions about a system with some uncertainty in system parameters, like the service guarantees in our model, EUT fails to describe the user decisions precisely. In this situation, we use Prospect Theory to model user decisions and to capture the psychophysics of end-user decision making \cite{Yang:2015:PPC}. To do so, we assume the SPs still make the same offers to the user as they did under EUT, however, we assume the user makes decisions about the received offers based on PT. In this work, we just focus on the probability weighting effect (PWE) of PT to see its effects on the NEs of our Stackelberg game. We use the Prelec function \cite{Prelec:1998:PWF} to model the PWE under PT:
\begin{align}
\label{eq20}
&& w(p)=exp(-{(-ln(p)}^\alpha ),~~(0 < \alpha< 1 ).
\end{align}
 This function is a regressive and s-shaped function which is concave in $0<p<1/e$ region and convex in $1/e<p<1$ region, and $w(p)>p$ in the former domain while $w(p)<p$ in the later. Considering this function to model PWE of PT, we can infer that the user overestimates the service guarantees of the received offers if the advertised service guarantees are less than $1/e=0.37\%$, and user over estimates them if advertised service guarantees are higher than $0.37\%$. In this work, assuming that SP networks are well designed to offer service guarantees higher than $1/e$, we focus on the under estimating of service guarantees by the user under PT. It is also justified by the fact that end-users in real world wireless networks typically perceive the quality of their service as lower than that advertised by the SPs.
\subsection{Symmetric SPs under PT}
In this case, we assume both SPs use the same pricing and cost functions as described in subsection \ref{sec:NEEUT}.\MakeUppercase{\ref{SymmetricEUT}} for the EUT case. However, the user makes decision about the received offers based on PT, and thus user's perception of the service guarantee will be affected by PWE of PT. In fact, the user under estimates the service guarantees of the received offers under PT, assuming service guarantees to be higher than $1/e$. Table \ref{fig:Table2} summarizes the potential NEs for the user in the Stackelberg game under PT for both symmetric and asymmetric SPs cases. Under PT we only have two pure strategies, in which the user either accepts both of the received offers or rejects both of them, and we don't have any NE strategy for the user in a form of $(0,1)$,or $(1,0)$ under which the user receives its service from only one SP. In fact, the most important difference between PT and EUT in our model is that, under PT, none of the individual SPs can satisfy user constraints, independently. The reason for this is that based on Theorem 1, the SPs always satisfy the data rate constraint with equality. Hence, under PT any under estimation of service guarantees by the user will result in violation of the data rate bid selection constraint for the user. In this situation, the expected data rate for the user from any individual SP is less than $b_{min}$, as $w(\bar F _{B_w}( {b_w^*,BW_{w,EUT}^*}))<\bar F _{B_w}( {b_w^*,BW_{w,EUT}^*})$, and $w(\bar F _{B_c}( {b_c^*,BW_{c,EUT}^*}))<\bar F _{B_c}( {b_c^*,BW_{c,EUT}^*})$. Hence, the user cannot accept any individual SP offer, and both $(0,1)$ and $(1,0)$ strategies are not feasible for the user under PT. Unlike in the EUT case, because the data rate constraint is not guaranteed to be satisfied for the user under PT, the user has two conditions for each NE strategy, to ensure that both utility constraint and data rate constraint are satisfied. If both of these constraints are satisfied, then the user chooses $(1,1)$ strategy and accepts both offers, otherwise, the user rejects both offers by choosing $(0,0)$ strategy.
\begin{table}
\setlength{\belowcaptionskip}{-5pt}
\caption{NE table for user under PT.}
  \includegraphics[width=9.5cm,height=8cm]{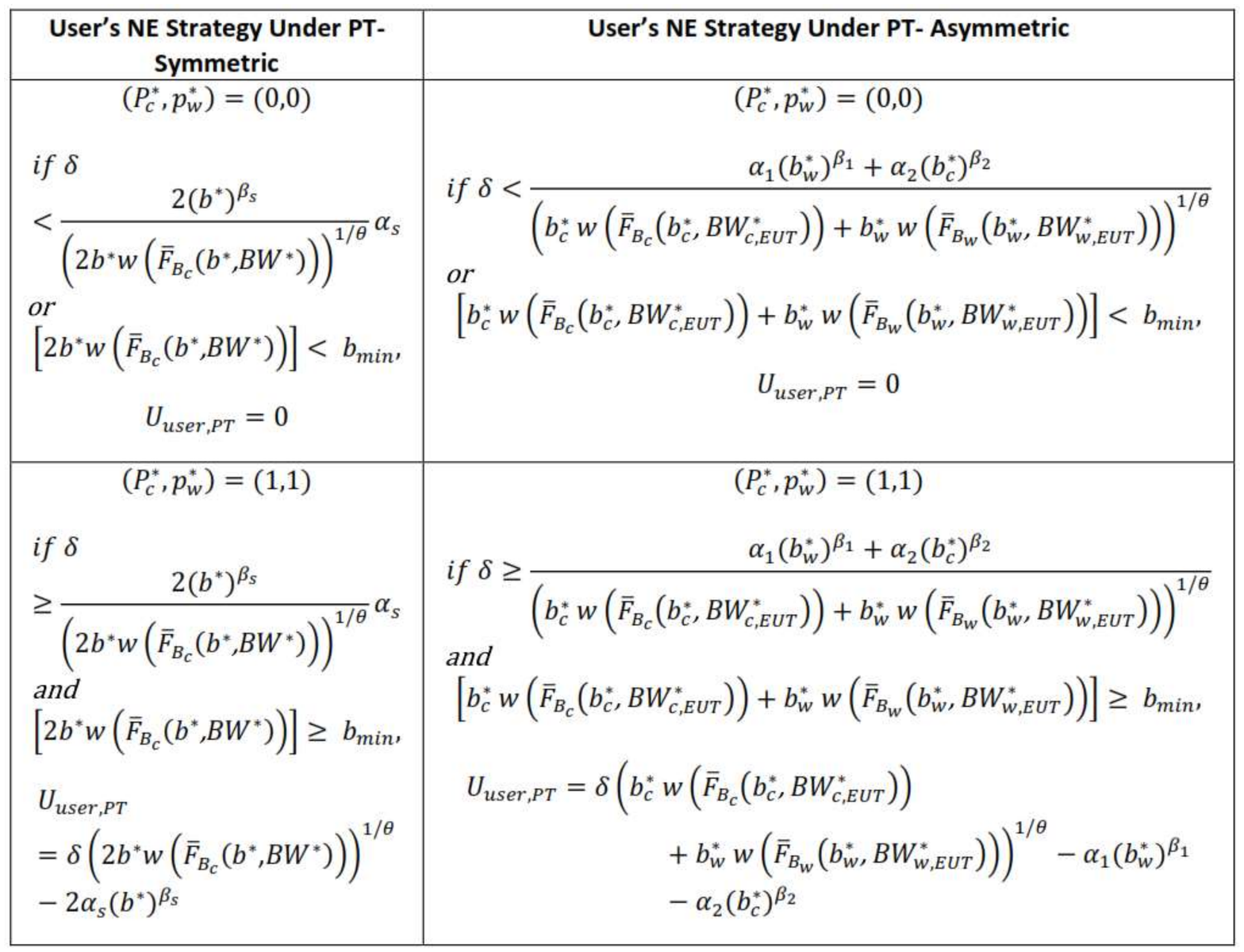}
  \centering
  \label{fig:Table2}
   \vspace{-3em}
\end{table}
 
 \subsection{Asymmetric SPs under PT}
In this case as described in subsection \ref{sec:NEEUT}.\MakeUppercase{\ref{AsymmetricEUT}}, we assume the SPs use different cost and pricing functions. As mentioned before, we assume the user will underestimate the service guarantees of the SP offers under PT, due to PWE. Hence, user's estimation of the expected data rate, and the expected payoff from the received offers will change accordingly as both of them are functions of service guarantees. Potential NEs for the user in the Stackelberg game under PT with asymmetric SPs are shown in Table \ref{fig:Table2}. Due to underweighting of service guarantee by the user under PT, none of individual SPs can satisfy the users data rate constraint independently, hence, none of $(0,1)$ and $(1,0)$ strategies are feasible for the user. In this case, if the user's payoff from both SPs under $(1,1)$ strategy is positive, and also, the users expected data rate from both SPs is higher than its minimum data rate, the user chooses $(1,1)$ strategy and accepts both of the received offers. Otherwise, the user rejects both of received offers by adopting $(0,0)$ strategy. As we can see in Table \ref{fig:Table2}, the conditions for having $(1,1)$ NE strategy for user under PT are stricter than in EUT case, due to the underweighting of service guarantees which makes it more difficult for both data rate and utility constraints to be satisfied under PT. Further, considering the infeasibility of $(0,1)$ and $(1,0)$ strategies, and noting that the user has only four potential strategies, we can infer that the chance of choosing $(0,0)$ strategy by the user and rejecting both of SP offers under PT is much higher than it under EUT.

\section{Proposed Bidding Strategy for SPs under PT}
\label{sec:PBS}
In previous section, we inferred that under PT if the user underestimates the advertised service guarantees, it is more likely for the user to reject the SPs offers by choosing $(0,0)$ strategy. This can reduce the utility of SPs if they do not redesign their bidding strategies. So, in order to help the SPs to cope with user decisions under PT, and increase the chance of their offers to be accepted by the user, we propose a new bidding strategy for the SPs under PT based on bandwidth expansion \cite{Yang:2015:PPC}.

\subsection{Bandwidth Expansion Under PT}
As mentioned before, under PT user replaces its subjected service guarantee, for example $w(\bar F _{B_w}( {b_w^*,BW_{w,EUT}^*}))$ for the WiFi SP, with the service guarantee advertised by the SPs, $\bar F _{B_w}( {b_w^*,BW_{w,EUT}^*})$ for the WiFi. And if the advertised service guarantee is higher than $1/e$, the user will underestimate it, which means we have $w(\bar F _{B_w}( {b_w^*,BW_{w,EUT}^*}))<\bar F _{B_w}( {b_w^*,BW_{w,EUT}^*})$. In such a situation if the SPs offer the same bid as in EUT, the user will reject their bids as the data rate constraint of the user will not be satisfied by such offers under PT. So, for SPs to convince the user to accept their bids, one way is to increase their offered bandwidth such that despite the user's under estimation of the service guarantee, still the expected data rate for the user under PT is higher than its minimum data rate threshold, $b_{min}$. If we denote $BW_{w,PT}^*$, and $BW_{c,PT}^*$ as the amount of BW required by the WiFi and the cellular SPs, respectively to convince the user to accept their offers under PT, assuming their offers got accepted under EUT, we must have:
\begin{align}
& BW_{w,PT}^* \geq \bar F _{B_w}^{-1}( {\lambda_w,b_w^*}) \\
&\text{where}\nonumber \\
&\lambda_w=~w^{-1}(\bar F _{B_w}( {b_w^*,BW_{w,EUT}^*})
\end{align}
where, $( {b_w^*,BW_{w,EUT}^*})$ is the WiFi SP bid under EUT. In fact the WiFi SP must expand its offered bandwidth so as to offer $b_w^*$ data rate with the service guarantee of $\lambda_w$, where $w(\lambda_w)=~\bar F _{B_w}( {b_w^*,BW_{w,EUT}^*})$. This way, the extra bandwidth offered by the WiFi SP compensates the under estimation of the service guarantee by user under PT. Same conditions are valid for cellular SP.

\section{Simulation Results}
\label{sec:SimulationResults}
In this section, we provide several simulation results to validate the efficiency of our model. We consider a HetNet scenario similar to the one presented in Fig. \ref{Fig:figl}, in which there are $N$ randomly distributed users that are covered by 9 BSs. There is one cellular SP in the center of a macro cell and 8 overlaid WiFi SPs. We assume that the maximum coverage radius for WiFi SPs is 300 ft, while, all users are located within the coverage range of the cellular SP and can get served by it. We use Hata propagation model for urban environments to capture the effects of path loss on the user-BSs links. The table below contains the description and values of different parameters we have used in our simulations.
\begin{figure}[tb!]
  \includegraphics[width=\linewidth]{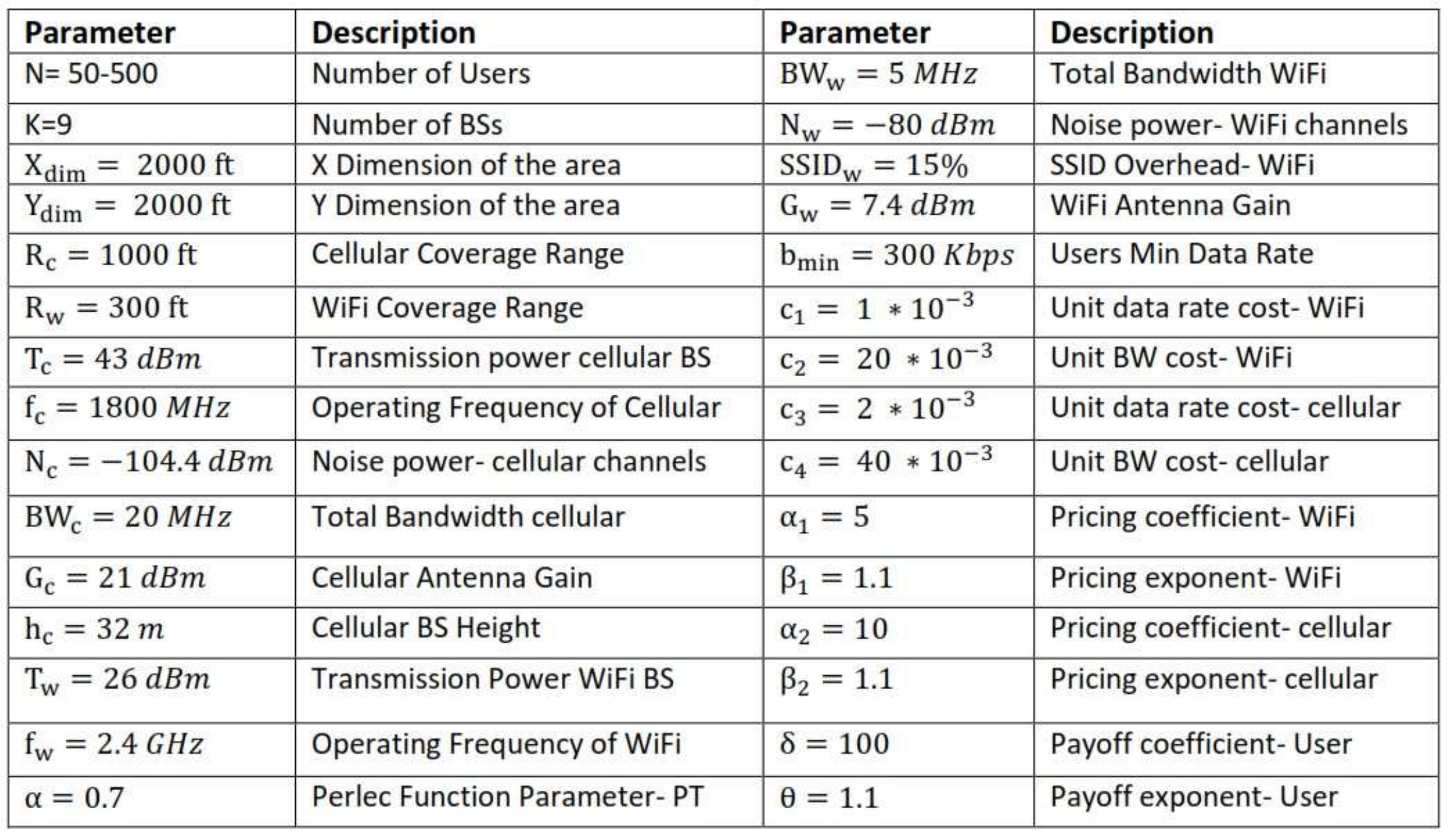}
  \centering
  \caption{Simulation Parameters.}
  \label{fig7}
\end{figure}

Fig. \ref{fig8} compares the sum utility of BSs under these three scenarios while we change the number of users (load) from $N=50$ to $N=500$.
\begin{figure}[tb!]
  \includegraphics[width=\linewidth]{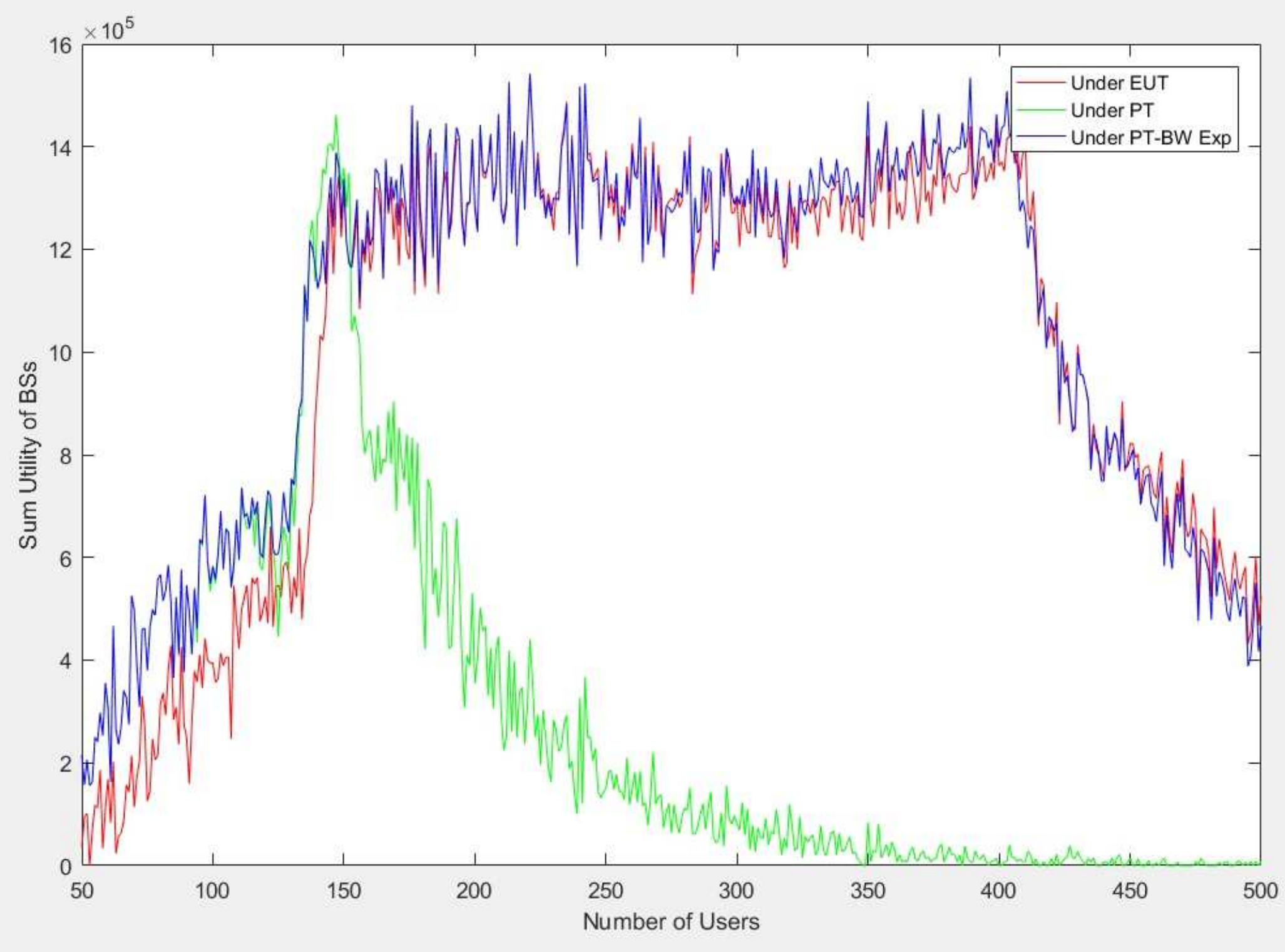}
  \centering
  \caption{Comparing sum utility of BSs under three different scenarios.}
  \label{fig8}
\end{figure}

When the number of users is less than 150, PT outperforms EUT in terms of SPs sum utility. The reason is that when the load is low, SPs offer higher data rates to users, hence, even with low service guarantees they can satisfy the user's minimum data rate constraint. And due to probability weighting effect (PWE) under PT, when the advertised service guarantees are lower than $1/e=0.37\%$, the user overestimates service guarantees, which leads to a higher acceptance rate for the SP offers under PT as compared to EUT. By increasing the acceptance rate, the SPs' sum utility is also increased. However, by increasing the number of users, the SPs advertised data rates will decrease as they assign less BW to each user. They have to increase their service guarantees to satisfy the user's data rate constraint. Hence, the user’s acceptance rate and consequently the SPs utilities decrease dramatically by increasing the number of users beyond $N=150$. However, using bandwidth expansion feature under PT, the SPs are able to retain most of their subscribed users with some extra cost. When the number of users goes beyond 400, which is the max network capacity under our setting, most of the SPs are not capable of satisfying user’s data rate constraint as their BW budget per user will diminish by increasing the load. Therefore, the number of users associated to BSs and consequently the SPs sum utility decreases when N is higher than 400. Fig. \ref{fig9} compares the sum utility of users under EUT, PT and PT with BW expansion for different load situations. As we expected, there are two turning points in this diagram. The first point is where the number of users goes beyond 150 users, which results in a considerable dropping of user association rate and the sum utility of users under PT. While by enabling bandwidth expansion under PT as its shown in Fig. \ref{fig9}, sum utility of users will be increased by increasing the number of users as more users will be associated to SPs until the number of users get close to the network capacity which is 400 users in our setting. After that, the user association rate reduces again as bid selection constraints for many users cannot be met by SPs considering their limited resources.

\begin{figure}[tb!]
  \includegraphics[width=\linewidth]{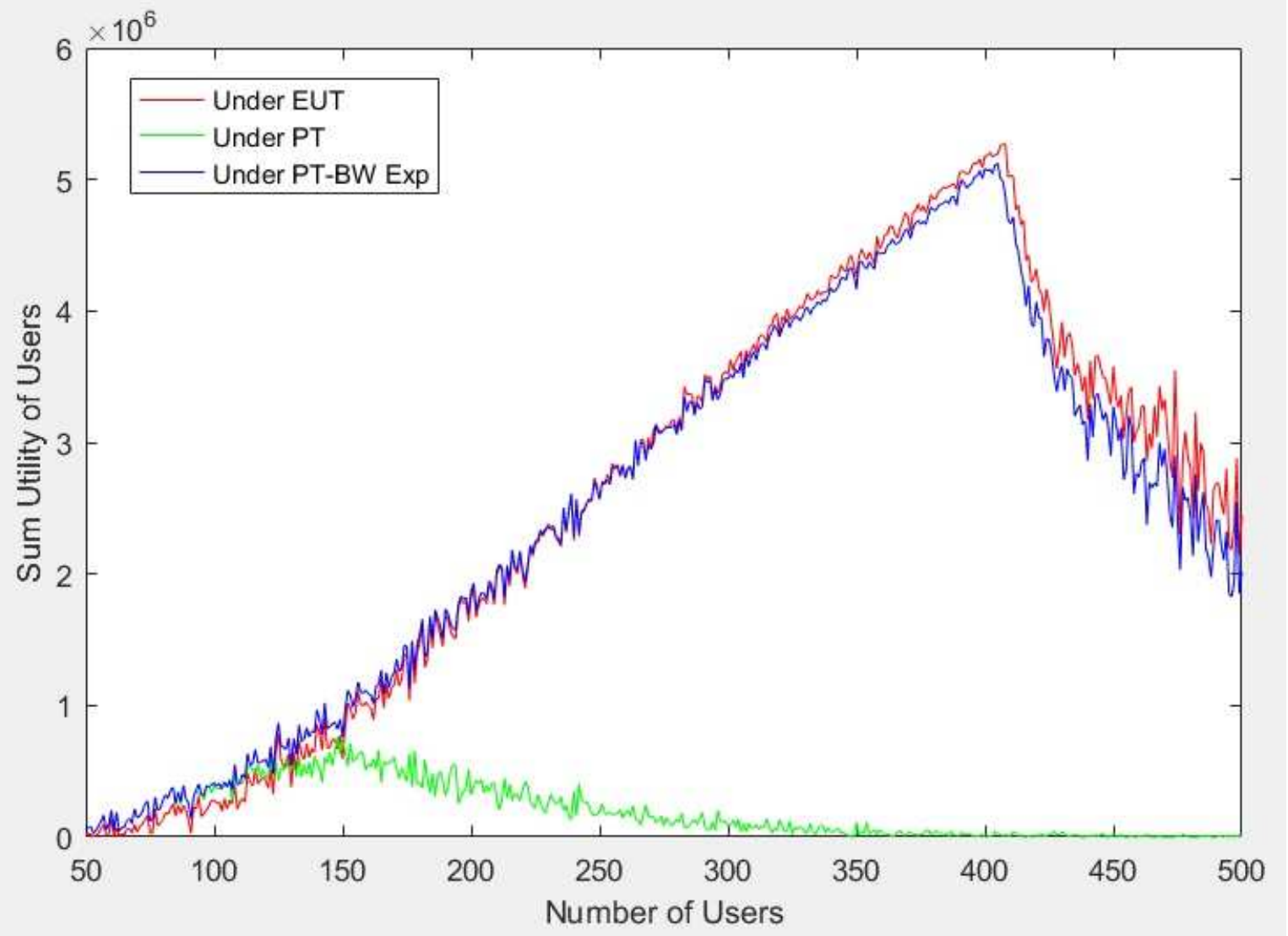}
  \centering
  \caption{Comparing sum utility of users under three different scenarios.}
  \label{fig9}
\end{figure}

To see how much extra costs the SPs have to incur to retain their users under PT using bandwidth expansion, we compare the average bandwidth consumption of users in EUT vs PT with bandwidth expansion feature in Fig. \ref{fig10}.
\begin{figure}[tb!]
  \includegraphics[width=\linewidth]{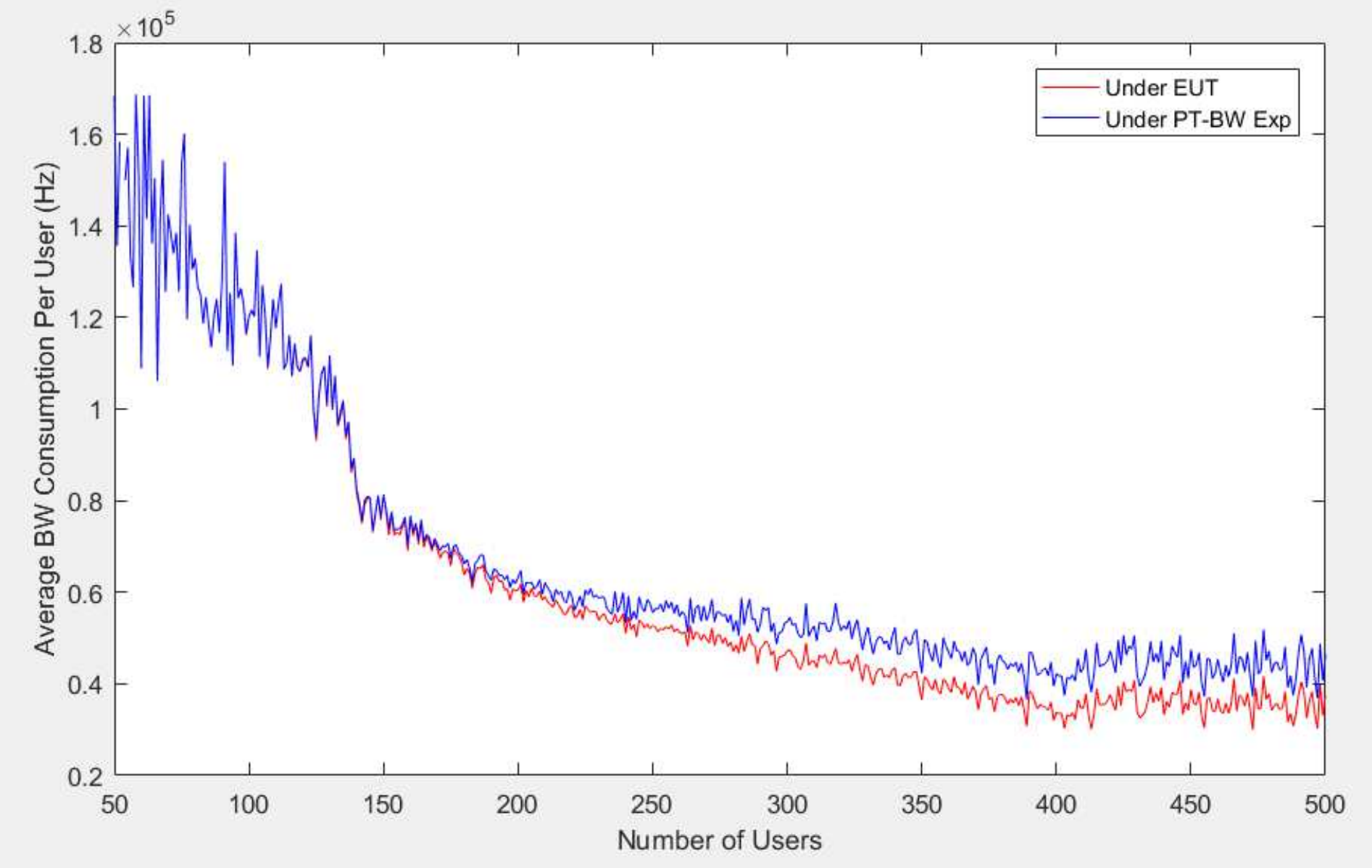}
  \centering
  \caption{Comparing Avg. BW consumption per user under EUT vs PT.}
  \label{fig10}
\end{figure}
When the SPs advertised service guarantees are less than $1/e=0.37\%$ threshold, which occurs when the number of users are less than 150, the SPs are not required to consume any extra BW to retain their EUT users under PT. However, by increasing the number of users beyond 150, the number of users who receive an offer with a service guarantee higher than $0.37\%$ will increase. So, to retain those users under PT, the SPs have to offer extra BW to satisfy user's bid selection constraints. The amount of extra BW that the SPs have to provide for the users under PT raises by increasing the load, as the SPs' service guarantees increase in this situation. So, user's underestimation of service guarantees become more intense. Note that we used the Perlec function with parameter $\alpha=0.7$ to capture PWE under PT. By using lower values for $\alpha$, the gap between these two curves become larger as the number of users is increasing. The simulation results under PT show that the SPs can retain most of their subscribed users under EUT by offering some extra bandwidth to them to compensate negative effects of the user's underestimation of their advertised service guarantees. Although bandwidth expansion increases the SPs cost, however as long as their expected payoff from user is higher than their cost, its justifiable for them to perform bandwidth expansion to retain their users.

\section{Conclusion}
\label{sec:conclusion}
This paper studied the problem of user association in wireless HetNets under PT, where all covering WiFi and cellular SPs offer data services to users who are free to accept or reject any of the received offers. We modeled this problem using a Stackelberg game, and extracted all potential NEs for this game under both EUT and PT. The NE existence analysis reveals that some of the feasible NEs under EUT become infeasible under PT when users underestimate the advertised service guarantees by the SPs. In this situation, the underestimation of service guarantees by the user increases the chance of rejection for the SPs bids, and hence causes the SP's average utility to diminish under PT. To avoid such a utility loss for the SPs under PT, we proposed a new bidding strategy by which the SPs are able to cope with the underestimation of their service guarantees by the user under PT. Our simulation results demonstrate that using such a bidding strategy, the SPs are able to retain most of their lost EUT users under PT by incurring some additional costs, and hence reduce their utility loss under PT.







\end{document}